\begin{document}

\title{Optimal power control in Cognitive MIMO systems with limited feedback}

\author{
\authorblockN{George A. Ropokis \authorrefmark{1} \authorrefmark{2},
David Gesbert \authorrefmark{1},
Kostas Berberidis \authorrefmark{1} \authorrefmark{3}}

\authorblockA{ \authorrefmark{2}
Computer Technology Institute and Press ``Diophantus", 26500 Rio-Patras, Greece
}
\authorblockA{\authorrefmark{1}Mobile Communications Dept., EURECOM, 06410, BIOT France, \\
}
\authorblockA{ \authorrefmark{3}
Dept. of Computer Engineering and Informatics, University of Patras, 26500 Rio-Patras, Greece
}
E-mail: \{ropokis,gesbert\}@eurecom.fr, berberid@ceid.upatras.gr
}

\maketitle

\newtheorem{theorem}{Theorem}

\begin{abstract}
In this paper, the problem of optimal power allocation in Cognitive Radio (CR) Multiple Input Multiple Output (MIMO) systems
is treated. The focus is on providing limited feedback solutions aiming at maximizing the secondary system rate subject
to a constraint on the average interference caused to primary communication. The limited feedback solutions are obtained
by reducing the information available at secondary transmitter (STx) for the link between STx and the secondary receiver (SRx) as
well as by limiting the level of available information at STx that corresponds to the link between the STx and the
primary receiver PRx. Monte Carlo simulation results are given that allow to quanitfy the performance achieved by the
proposed algorithms.
 
\end{abstract}
\begin{keywords}
Underlay Cognitive Radio, power policy, ergodic rate maximization, average interference constraint.
\end{keywords}
\section{Introduction}
Cognitive Radio (CR) is considered an effective approach for coping with the spectrum scarcity problem in wireless communications systems.
Among the several techniques that fall into the category  of CR, underlay CR techniques have drawn considerable attention.
The characteristic of such techniques is the fact that they allow the Secondary Transmitter (STx) to communicate with a Secondary
Receiver (SRx) in the presence of a Primary Transmitter (PTx) communicating with a Primary Receiver (PRx), provided that the average 
or peak interference  caused by STx transmission to PRx reception is below a predefined threshold. In this context, several
underlay CR techniques aim at optimally allocating the STx transmit power in a manner such that some QoS
metric, e.g. the average rate, of secondary communication is maximized subject to (s.t.) the constraint that the average or peak STx-PRx link
interference is kept below a predefined threshold. For example, in \cite{Gha07} the optimal STx power policy is presented for 
systems operating under an average STx-PRx interference constraint, while in \cite{Kan09} several policies are derived based on 
either average or peak power constraints regarding the STx transmit power and the STx-PRx link interference power.

Both these works, as well as most related works presented in the literature, are based on the so called ``Z" channel model that assumes
that there is no interference received by SRx, that corresponds to PTx transmission. More importantly, these works,  are limited to 
the study of CR Single Input Single Output (SISO) systems. Motivated by the above, in this work we propose novel power policies for 
CR Multiple Input Multiple Output (MIMO) systems taking into account the average interference caused by secondary transmission to primary 
reception. The proposed policies are characterized by the fact that they require limited feedback sent from the several network nodes 
(PRx and SRx) to STx. More specifically the proposed policies, instead of requiring exact knowledge of the CR MIMO channel matrix 
$\mathbf{H}$,  are based on the knowledge of only the eigenvalues of matrix $\mathbf{HH}^{H}$. Moreover, further feedback reduction 
schemes are derived, by introducing additional power policies that also assume only statistical CSI for the STx-PRx link.

\section{System Model}
A CR MIMO system is considered that operates in the presence of a SISO primary link. We adopt the ``Z'' channel model where 
there is no interference caused by the primary
transmission to secondary reception. In this case, the input output relation for the CR MIMO channel is given as
\begin{equation}
\mathbf{y}_{s} = \mathbf{H}\sqrt{P}\mathbf{x}_s +  \mathbf{w}_s,
\end{equation}
where $\mathbf{H}$ is the $M_R \times M_T$ channel matrix, $M_R$ being the number of receive antennas for the cognitive MIMO system and $M_T$ 
being the number of transmit antennas for the cognitive MIMO system. Rayleigh fading is assumed where the elements of matrix $\mathbf{H}$ are i.i.d.
zero mean complex Gaussian random variables with variance known to STx. Furthermore, $\mathbf{x}_{s}$ is the $M_T \times 1$ transmitted signal vector,
assumed to be $\mathbf{x}_{s} \sim {\cal CN}(\mathbf{0}_{M_T}, \mathbf{I}_{M_T})$, where $\mathcal{CN}\left(\mathbf{m},\mathbf{C}\right)$ stands for the
complex multivariate Gaussian distribution with mean $\mathbf{m}$ and covariance matrix $\mathbf{C}$. Matrix $\mathbf{I}_{M}$ stands for the $M\times M$
identity matrix and $\sim$ denotes equality in terms of distribution. In addition, $P$ is the per antenna transmit power and 
$\mathbf{w}_s$ is the Additive, White Gaussian Noise (AWGN) $M_R \times 1 $ vector.
For such a system model, the capacity of the secondary link, assuming STx has no knowledge of matrix $\mathbf{H}$, is given as \cite{Pau03}
\begin{equation}
C_s = \log_2 \det\left(\mathbf{I}_{M_R} + \frac{P}{N_0} \mathbf{H}\mathbf{H}^{H}\right)
\label{eq::mimo_cap}
\end{equation}
where $\det(\cdot)$ denotes the determinant and  $\log_{a}(\cdot)$ the logarithm with base $a$. Using standard matrix properties, \eqref{eq::mimo_cap} can be
written as
\begin{equation}
C_{s} = \sum_{i=1}^{M_R} \log_2 \left( 1 + \frac{P}{N_0} l_i \right)
\end{equation}
where $l_i$'s $i=1,\ldots,M_R$ are the non negative eigenvalues of matrix $\mathbf{H}\mathbf{H}^H$.

We assume that the above CR system operates in the presence of a primary communication link, and causes interference to primary communication. As a result, the input-output
relation for the primary communication link is expressed as
\begin{equation}
y_p = h_{pp}\sqrt{\mathcal{P}}x_p + \mathbf{h}_{sp}\sqrt{P} \mathbf{x}_{s}  + w_p
\label{eq::p_signal}
\end{equation}
where $h_{pp}$ is the PTx-PRx communication link, $\mathcal{P}$ is the constant transmit power used by the primary transmitter and $x_p \sim \mathcal{C}(0,1)$ the transmit symbol of
the primary transmitter. On the other hand,  $\mathbf{h}_{sp}$ is the $1\times M_T$ channel vector between the secondary transmitter and the primary receiver
 and $w_p$ the AWGN at the primary receiver side. By introducing the random variable $\eta = ||\mathbf{h}_{sp}||^2$,
we can then write the average interference caused by secondary transmission to primary reception as
\begin{equation}
{\mathcal U} = E\left\{\eta P \right \}.
\label{eq::interference}
\end{equation}
In the following analysis we will assume that the statistics, i.e. $\bar{\eta} = E \left\{\eta \right\}$ and the probability density function of $\eta$, are known to STx.
Having defined the secondary system capacity as well as the average interference caused by the STx-PRx link,
in the following sections we will present novel algorithms targeting on the maximization of the average achievable rate of the CR MIMO system.
This maximization takes place s.t. a constraint on the average STx-PRx interference and a constraint on the maximum STx
allowable peak power. Moreover the algorithms are designed in order to operate in the presence of limited feedback  at STx.
In the following sections, the novel algorithms are categorized according to the presumed PRx-STx feedback regarding 
the interference $\eta$ as well as the level of SRx-STx feedback concerning the matrix $\mathbf{H}$.

\section{Algorithms based on instantaneous STx-PRx knowledge}
As a starting point, we consider that STx has exact instantaneous knowledge concerning the STx-PRx interference channel. Moreover, adopting
a limited SRx-STx feedback scenario, we assume that STx has instantaneous knowledge of only the eigenvalues $l_i$'s $i=1\ldots,M_{R}$ of matrix $\mathbf{HH}^{H}$.
To the best of our knowledge this hypothesis has never been tested in the context of CR-MIMO links. The knowledge of such limited information does not 
suffice for applying the  well known singular value decomposition (SVD) based approach \cite{Pau03} and the use of the waterfilling algorithm for
capacity maximization. Nevertheless one can exploit the knowledge of eigenvalues in order to optimally allocate power across channel states. To this
end, STx can apply power allocation across channel states by solving the following optimization problem
\begin{equation}
\begin{split}
{}&{}\textrm{maximize :}\;\;\;
C_{s} = E \left\{ \sum_{i=1}^{M_R} \log_2 \left( 1 + \frac{P}{N_0} l_i \right) \right\} \\&
\textrm{s.t.}\;\;\;{\mathcal U} = E\left\{\eta P \right \} \leq Q, \;\; 0 \leq P \leq P_{\max}
\end{split}
\end{equation}
or equivalently, after a change of the logarithm's base, as
\begin{equation}
\begin{split}
{}&{}\textrm{maximize :}\;\;\;
E \left\{ \sum_{i=1}^{M_R} \ln \left( 1 + \frac{P}{N_0} l_i \right) \right\} \\&
\textrm{s.t.}\;\;\;{\mathcal U} = E\left\{\eta P \right \} \leq Q, \;\; 0 \leq P \leq P_{max}.
\end{split}
\label{eq::opt1}
\end{equation}
Notice that in this optimization problem, expectation is taken over the channel eigenvalues and
the random variable $\eta$. Therefore in addition to knowledge of $\mathbf{H}$ and $\eta$, also knowledge
of the second order statistics, i.e., the variance of the elements of the $\mathbf{H}$ channel matrix,
and the statistics of $\eta$, i.e. its distribution, is required. Using this information, the power policy
varies as a function of $\mathbf{H}$ and $\eta$ while taking into account the average interference,
positivity and peak power constraints. In what follows we will refer to the policy that solves the above optimization problem as the 
Eigenvalue Based Power Policy (EBPP). EBPP is presented in the following subsection.
\subsection{Derivation of the EBPP}
The EBPP  can be found by first checking if the inequality 
\begin{equation}
\bar{\eta} P_{max}\leq Q
\end{equation}
is satisfied. If this holds, then, the optimal power policy EBPP is readily expressed as
\begin{equation}
P_{EBPP} = P_{\max}.
\end{equation}
Otherwise, the optimization problem presented in \eqref{eq::opt1} has the solution presented in the following theorem
\begin{theorem}
If $\bar{\eta} P_{max} > Q$ the optimal power policy for optimization problem in \eqref{eq::opt1} is given by
\label{theor::1}
\small
\begin{equation}
P_{EBPP} = \begin{cases}
   0 &  \eta \geq \frac{\sum_{i=1}^{M_R}l_i}{\lambda N_0} \\
	 P_{max} &  \eta \leq  \sum_{i=1}^{M_R}\frac{ {l_i}}{\lambda\left(N_0 + P_{max} l_i\right)}\\
   \rho\left(\sum_{i=1}^{M_R}\frac{l_i}{N_0 + P l_i} - \lambda \eta =0 \right) & \text{otherwise} 
  \end{cases}
\end{equation}
\normalsize
where $\rho(f(P))$ stands for the root of equation $f(P)=0$ with respect to  $P$, and $\lambda$ is selected such that $ \bar{\eta}P_{EBPP} = Q$.
\end{theorem}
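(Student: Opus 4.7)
The plan is to recognize that the problem in \eqref{eq::opt1} is a convex optimization problem in the (infinite-dimensional) variable $P$ viewed as a function of the channel realization $(\mathbf{l},\eta)$: for each fixed realization the integrand $\sum_{i} \ln(1+Pl_i/N_0)$ is strictly concave in $P$, the interference constraint $E\{\eta P\}\leq Q$ is linear, and the box constraint $0\leq P\leq P_{\max}$ is convex. Hence strong duality holds and KKT conditions characterize the optimum.

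First I would form the Lagrangian with multiplier $\lambda\geq 0$ attached to the average interference constraint, and note that since $\bar\eta P_{\max}>Q$ the constant policy $P\equiv P_{\max}$ is infeasible, which (together with the concavity) forces the interference constraint to be active at the optimum and gives $\lambda>0$. Because the box constraint is realization-wise and the interference constraint couples realizations only linearly, I would dualize only the interference constraint and solve the remaining problem pointwise for each $(\mathbf{l},\eta)$:
\begin{equation}
\max_{0\leq P\leq P_{\max}} \; \sum_{i=1}^{M_R}\ln\!\left(1+\tfrac{P}{N_0}l_i\right)-\lambda\eta P.
\end{equation}
Since the objective is strictly concave in $P$, its derivative
\begin{equation}
g'(P)=\sum_{i=1}^{M_R}\frac{l_i}{N_0+Pl_i}-\lambda\eta
\end{equation}
is strictly decreasing in $P$ on $[0,P_{\max}]$, so the maximizer is unique and determined by the sign of $g'$ at the endpoints.

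Next I would translate these endpoint conditions into the three cases of the theorem. The regime $P^{\star}=0$ arises exactly when $g'(0)\leq 0$, i.e.\ $\sum_i l_i/N_0\leq \lambda\eta$, which rearranges to $\eta\geq \sum_i l_i/(\lambda N_0)$. The regime $P^{\star}=P_{\max}$ arises exactly when $g'(P_{\max})\geq 0$, i.e.\ $\eta\leq \sum_i l_i/[\lambda(N_0+P_{\max}l_i)]$. In the complementary interior regime the unique root of $g'(P)=0$ is the maximizer, which is the expression written with the $\rho(\cdot)$ notation. Finally, $\lambda$ is pinned down by the fact, established above, that the average interference constraint is active, giving $E\{\eta P_{EBPP}\}=Q$; monotonicity of $E\{\eta P_{EBPP}(\lambda)\}$ in $\lambda$ (decreasing as $\lambda$ grows, since larger $\lambda$ pushes more realizations into the zero regime and lowers $P$ in the interior regime) guarantees existence and uniqueness of such a $\lambda$.

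The only nontrivial step is justifying that the interference constraint must bind so that $\lambda>0$ can be invoked; this follows because the unconstrained (except for peak-power) optimum would be $P\equiv P_{\max}$, which violates $E\{\eta P\}\leq Q$ precisely under the hypothesis $\bar\eta P_{\max}>Q$, so any optimal feasible policy must saturate the interference budget. The rest is routine KKT bookkeeping and exploiting strict concavity to get unique thresholds and a unique interior root.
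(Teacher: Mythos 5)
Your proposal is correct and follows essentially the same route as the paper: both convert \eqref{eq::opt1} into a convex program, attach a multiplier $\lambda$ to the average interference constraint, and obtain the three regimes of $P_{EBPP}$ by examining when the box constraint $0 \leq P \leq P_{\max}$ is active, which is exactly the paper's KKT case analysis on $\mu$ and $\nu$ rephrased as sign conditions on $g'(0)$ and $g'(P_{\max})$. Your added justifications --- that $\bar{\eta}P_{\max} > Q$ forces the interference constraint to bind so that $\lambda > 0$, and that monotonicity of $E\{\eta P_{EBPP}(\lambda)\}$ in $\lambda$ yields existence of the required multiplier --- are welcome details the paper leaves implicit, but they do not change the underlying argument.
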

\begin{proof}
The optimization problem in \eqref{eq::opt1} can be solved by noticing that due to concavity of the objective function and convexity of constraint functions,
it is convex. Thus, the solution to this optimization problem can
be found by applying KKT conditions which are expressed as follows
\begin{equation}
\begin{split}
{}&{}\sum_{i=1}^{M_R} \frac{l_i}{N_0 + P l_i} - \lambda \eta + \mu -\nu = 0 \\&
\mu P = 0, \mu \geq 0, P \geq 0, \\&
\nu (P-P_{max}) = 0 , \nu \geq 0, P-P_{max} \leq 0 \\&
\lambda (E\left\{ \eta P \right\} - Q) = 0, \lambda \geq 0, E\left\{ \eta P \right\} \leq Q
\end{split}
\end{equation}
where $\lambda$ is a Lagrange Multiplier corresponding to the average interference constraint and $\mu, \nu$ are multipliers corresponding to the
non negativity $P \geq 0$ and peak power $P \leq P_{max}$ constraints respectively.
Based on KKT conditions, the following cases need to be examined separately.
\begin{itemize}
\item $P_{EBPP}=0$: In this case, it holds that $\mu \geq 0$ and that  multiplier $\nu$ equals $0$
\footnote{This is due to the fact that the corresponding constraint, i.e. $P\leq P_{max}$ is inactive}.
KKT conditions then lead to the inequality
\begin{equation}
\eta \geq \frac{\sum_{i=1}^{M_R}l_i}{\lambda N_0}
\end{equation}
\item $P_{EBPP} = P_{max}$: In this case is holds that $\nu \geq 0$ and that $\mu =0$ since constraint $P=0$ is inactive.
Therefore KKT conditions lead to the following inequality for this case
\begin{equation}
\eta \leq \sum_{i=1}^{M_R} \frac{l_i}{\lambda(N_0 + l_i P_{max})}.
\end{equation}
\item $P_{EBPP} > 0$: In this case, since constraints $0 \leq P \leq P_{max}$ are inactive, it holds that $\mu = \nu = 0$ As a result, KKT conditions state that the optimal policy  $P$ should satisfy the
equation
\begin{equation}
\sum_{i=1}^{M_R} \frac{l_i}{N_0 +P_{EBPP}l_i} - \lambda \eta = 0.
\end{equation}
where the solution of this equation can be found using an iterative root finding  algorithm.
\end{itemize}
Moreover, by KKT conditions, it follows that the Lagrange
Multiplier $\lambda$ must be chosen such that $E\left\{\eta P\right\} = Q$.
\end{proof}

Having derived the solution to the power alocation problem \eqref{eq::opt1}, in the following subsection we present a second power policy, namely the
Maximum Eigenvalue Based Power Policy (MEBPP) that further reduces the need for feedback on the SRx-STx link since it requires only the knowledge
of the maximum eigenvalue of $\mathbf{HH}^{H}$. 
\subsection{Derivation of MEBPP}
In this section we derive a second power policy that is based on the knowledge of only the maximum eigenvalue $l_{max}$ of matrix $\mathbf{H}\mathbf{H}^{H}$.
Based on this knowledge, STx can formulate the following optimization problem
\begin{equation}
\begin{split}
{}&{}\textrm{maximize:}\;\;\;E\left\{ \ln \left(1 + \frac{P}{N_0} l_{\max}\right )\right\} \\&
\textrm{s.t.}\;\;\;  E\left\{\eta P\right\} \leq Q, 0 \leq P \leq P_{max}.
\end{split}
\label{eq::opt2}
\end{equation}
In case that $\bar{\eta} P_{max} \leq Q$ the solution to this optimization problem is simply $P_{MEBPP} = P_{\max}$.
Otherwise, the solution to this problem is stated in the following theorem:
\begin{theorem}
\label{eq::theor2}
The solution to optimization problem \eqref{eq::opt2} if $\bar{\eta} P_{max} > Q$ is given as 
\begin{equation}
P_{MEBPP} =  \min\left\{ \left[ \frac{1}{\lambda \eta} - \frac{N_0}{l_{max}} \right]^{+}, P_{max} \right \}
\end{equation}
where $\lambda$ is the Lagrange multiplier, selected in order to satisfy the constraint $E\left\{ \eta P_{MEBPP} \right\} = Q$.
\end{theorem}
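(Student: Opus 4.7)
The plan is to replicate the convex-optimization / KKT argument used for Theorem~\ref{theor::1}, but now with a single-eigenvalue objective, which collapses the stationarity condition to a scalar equation solvable in closed form. First I would verify that \eqref{eq::opt2} is a convex program: the integrand $\ln(1+P\,l_{max}/N_0)$ is concave in $P$ for every fixed $l_{max}$, while the interference and box constraints are affine. Hence KKT conditions are both necessary and sufficient for optimality.

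Next I would attach a scalar Lagrange multiplier $\lambda\geq 0$ to the average interference constraint and realization-wise multipliers $\mu\geq 0$, $\nu\geq 0$ to the pointwise constraints $P\geq 0$ and $P\leq P_{max}$. Because the coupling constraint is linear and $\lambda$ is scalar, the Lagrangian separates across channel realizations, so the problem reduces to pointwise maximization and the stationarity condition
\[
\frac{l_{max}}{N_0+P\,l_{max}}-\lambda\eta+\mu-\nu=0,
\]
together with complementary slackness on $\mu$ and $\nu$. I would then treat three regimes as in the proof of Theorem~\ref{theor::1}: (i) if $P=0$ then $\nu=0$ and we obtain $\eta\geq l_{max}/(\lambda N_0)$, i.e.\ $1/(\lambda\eta)\leq N_0/l_{max}$; (ii) if $P=P_{max}$ then $\mu=0$ and we obtain $\eta\leq l_{max}/[\lambda(N_0+l_{max}P_{max})]$, i.e.\ $1/(\lambda\eta)-N_0/l_{max}\geq P_{max}$; (iii) for interior $0<P<P_{max}$ both $\mu$ and $\nu$ vanish and stationarity reduces to $l_{max}/(N_0+P\,l_{max})=\lambda\eta$, which, unlike the multi-eigenvalue case of Theorem~\ref{theor::1}, admits the explicit root $P=1/(\lambda\eta)-N_0/l_{max}$.

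The three branches then repackage compactly: branch (i) makes the positive part $[\,1/(\lambda\eta)-N_0/l_{max}\,]^{+}$ collapse to zero, branch (ii) makes the outer $\min\{\cdot,P_{max}\}$ saturate at $P_{max}$, and branch (iii) is exactly the interior waterfilling expression. Concatenating yields the closed form stated in the theorem. Finally, since the assumption $\bar{\eta}P_{max}>Q$ rules out the trivial solution $P\equiv P_{max}$ being feasible on average, the interference constraint must be active at the optimum; this fixes $\lambda$ through $E\{\eta P_{MEBPP}\}=Q$.

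The only genuine obstacle is justifying the pointwise KKT system from the expectation formulation; this is settled by observing that, with a scalar multiplier $\lambda$ attached to the sole coupling constraint, the Lagrangian decomposes into an expectation of realization-wise concave scalar programs in $P(\eta,l_{max})$, so maximizing the expectation is equivalent to maximizing the integrand for almost every realization of $(\eta,l_{max})$. All remaining steps amount to elementary algebraic rearrangement of the waterfilling expression.
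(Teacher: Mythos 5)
Your proposal is correct and follows essentially the same route as the paper, which simply invokes the KKT argument of Theorem \ref{theor::1} with $M_R=1$ and $l_1$ replaced by $l_{max}$. Your added observation that the single-eigenvalue stationarity condition admits the explicit root $P = 1/(\lambda\eta) - N_0/l_{max}$, so that the three KKT branches repackage into the stated $\min\{[\cdot]^{+},P_{max}\}$ form, is exactly the step the paper leaves implicit.
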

\begin{proof}
The proof is derived following the steps of proof of Theorem \ref{theor::1} while setting $M_R=1$ and substituting $l_1$ by $l_{max}$
\end{proof}
Having presented the two power policies, in the following section we will present further limited feedback
power policies. 
\section{ Power policies for further limited channel feedback}
In this section  we present two new power policies that are based on a further limited feedback scenario.
In more detail we examine the case that the secondary transmitter has only knowledge of the statistics  
of $\eta= ||\mathbf{h}_{sp}||^2$, i.e.  knowledge of $\bar{\eta}$ and the distribution of $\eta$. In this case, the constraint on the average interference
caused by STx to PRx is given as 
\begin{equation}
{\mathcal V} = E\left\{ \eta P \right\} = \bar{\eta} E\left\{ P \right\}.
\label{V_constraint}
\end{equation}
Using ${\mathcal V}$, in the
following two subsections, we derive two new limited feedback power
policies, the Interference statistics and eigenvalue based power policy (IEBPP) and the
Interference statistics and maximum eigenvalue based  power policy (IMEBPP).

\subsection{Derivation of IEBPP}
The policy IEBPP is essentially the solution to the following optimization problem
\begin{equation}
\begin{split}
{}&{}\textrm{maximize :}\;\;\;
E \left\{ \sum_{i=1}^{M_R} \ln \left( 1 + \frac{P}{N_0} l_i \right) \right\} \\&
\textrm{s.t.}\;\;\;{\mathcal V} = \bar{\eta}  E\left\{ P \right \} \leq Q, \;\; 0 \leq P \leq P_{max}.
\end{split}
\label{eq::opt3}
\end{equation}
In case that $\bar{\eta}P_{max} \leq Q$ the solution to this problem is the trivial fixed policy $P = P_{\max}$.
Otherwise, for the case $\bar{\eta} P_{max} > Q$, the derivation of the IEBPP is also based on the use of Lagrange Multipliers in a way
similar to the one presented in Theorem 1. As a result, IEBPP is defined as $P_{IEBP} = P_{max}$ when $\bar{\eta} P_{max}  \leq Q$.
Otherwise, $P_{IEBPP}$ is defined as shown in the following theorem.
\begin{theorem}
In case $\bar{\eta} P_{max} > Q$  The power policy  is expressed as:
\small
\begin{equation}
P = \begin{cases}
   0 & \bar{\eta} \geq \frac{\sum_{i=1}^{M_R}l_i}{\lambda N_0} \\
	 P_{max} & \bar{\eta} \leq \sum_{i=1}^{M_R} \frac{ {l_i}}{\lambda\left(N_0 + l_i P_{max}\right)}\\
   \rho\left(\sum_{i=1}^{M_R}\frac{l_i}{N_0 + P l_i} - \lambda \bar{\eta} =0 \right) & \text{otherwise} 
  \end{cases}
\end{equation}
\normalsize
\label{theor::3}
where $\lambda$ is the lagrange multiplier that is selected such as to satisfy the constraint $\bar{\eta}E\left\{P_{IEBPP}\right\} = Q$
\end{theorem}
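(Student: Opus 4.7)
The plan is to mimic the proof of Theorem \ref{theor::1} almost verbatim, with the single substitution of $\bar{\eta}$ for $\eta$ in the Lagrangian differentiation. First I would verify that the optimization in \eqref{eq::opt3} is convex: the objective $E\{\sum_i \ln(1 + \tfrac{P}{N_0}l_i)\}$ is concave in $P$ (sum of concaves, and expectation preserves concavity), and the feasible set defined by the linear constraint $\bar{\eta}E\{P\}\leq Q$ together with the box constraint $0\leq P\leq P_{max}$ is convex. Hence the KKT conditions are both necessary and sufficient.

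Next I would write down the Lagrangian. The crucial point is that, unlike \eqref{eq::opt1}, the interference constraint here reads $\bar{\eta} E\{P\} - Q \leq 0$, so differentiating with respect to $P(\mathbf{l})$ inside the expectation brings out the deterministic constant $\lambda \bar{\eta}$ instead of the random $\lambda\eta$ that appeared in Theorem \ref{theor::1}. The stationarity condition thus becomes
\begin{equation}
\sum_{i=1}^{M_R} \frac{l_i}{N_0 + P l_i} - \lambda \bar{\eta} + \mu - \nu = 0,
\end{equation}
together with the complementary slackness conditions $\mu P = 0$, $\nu(P - P_{max})=0$, and $\lambda(\bar{\eta}E\{P\}-Q)=0$, plus the usual sign conditions $\mu,\nu,\lambda\geq 0$.

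From here I would simply repeat the three-way case split of Theorem \ref{theor::1}. When $P=0$, one has $\nu=0$ and $\mu\geq 0$, which after rearrangement yields $\bar{\eta}\geq \sum_i l_i/(\lambda N_0)$. When $P=P_{max}$, one has $\mu=0$ and $\nu\geq 0$, which yields $\bar{\eta}\leq \sum_i l_i/[\lambda(N_0+l_i P_{max})]$. In the interior case both $\mu,\nu$ vanish and stationarity reduces to the equation $\sum_i l_i/(N_0+Pl_i)=\lambda\bar{\eta}$, whose root (unique by monotonicity of the left-hand side in $P$) defines $P_{IEBPP}$. Finally the primal feasibility with complementary slackness forces $\bar{\eta}E\{P_{IEBPP}\}=Q$, which determines $\lambda$.

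I do not anticipate a real obstacle: since $\bar{\eta}$ is deterministic the derivation is actually structurally simpler than that of Theorem \ref{theor::1}, and there is no novel convex-analysis ingredient required. The only thing worth a brief remark is the uniqueness of the root in the interior case, which follows immediately from the strict monotonic decrease of $P \mapsto \sum_i l_i/(N_0+Pl_i)$ on $(0,P_{max})$.
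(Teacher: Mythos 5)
Your proposal is correct and follows essentially the same route as the paper, which itself proves this theorem by reducing to the proof of Theorem \ref{theor::1} with $\eta$ replaced by $\bar{\eta}$; your version merely spells out the KKT case analysis explicitly and adds a (valid) monotonicity remark for uniqueness of the interior root.
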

\begin{proof}
The above theorem can be proved by introducing the Lagrangian function and using KKT conditions. 
It is easy to see then that the problem is similar to optimization problem \eqref{eq::opt1}
with $\eta$ being substituted by $\bar{\eta}$. The proof of Theorem \ref{theor::3} is then obtained
following the procedure of the proof \ref{theor::1} and substituting $\eta$ by $\bar{\eta}$.
\end{proof}
Similar to the case of EBPP, a limited feedback version of IEBPP, namely Interference statistics and Maximum Eigenvalue
Based Power Policy (IMEBPP) can be constructed as shown in the following subsection.
\subsection{Derivation of IMEBPP}
The IMEBPP is derived as the solution to the following optimization problem
\begin{equation}
\begin{split}
{}&{} \textrm{maximize}\;\; E \left\{ \ln\left( 1 + \frac{P}{N_0} l_{\max} \right) \right\} \\&
\textrm{s.t.}\;\; \bar{\eta}E\left\{ P \right\} \leq Q,\;\; 0 \leq P \leq P_{max} 
\end{split}
\label{eq::opt4}
\end{equation}
where it is assumed that STx has knowledge of only the maximum eigenvalue $l_{\max}$ of matrix $\mathbf{H}$
and the statistics of the elements of matrix $\mathbf{H}$ mentioned earlier as well as knowledge of the statistics of $\eta$. As in the previous
cases, the policy is derived by examining the two following cases.
\begin{inparaenum}
\item $ \bar{\eta}P_{max} \leq Q$: In this case the optimal policy is the policy $P_{IMEBPP}=P_{\max}$
\item $\bar{\eta}P_{max} >Q$ In this case the optimal policy is described in the following theorem.
\end{inparaenum}
\begin{theorem}
\label{theor::4}
The solution of problem \eqref{eq::opt4}, in case that $\bar{\eta}P_{max} > Q$ leads to the following power allocation scheme
\begin{equation}
P = \min \left\{ \left[ \frac{1}{\lambda \bar{\eta}} - \frac{N_0}{l_{\max}}\right]^{+}, P_{\max} \right\}
\end{equation}
where $\lambda$ is the Lagrange multiplier, selected in order to satisfy the constraint $\bar{\eta}E\left\{ P \right\} = Q$.
\end{theorem}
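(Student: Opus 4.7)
The plan is to follow the Lagrangian / KKT route used in the proof of Theorem \ref{theor::1}, but specialized to a single eigenvalue and with $\eta$ replaced throughout by its mean $\bar{\eta}$, then to repackage the three stationarity cases into the compact expression involving $[\cdot]^{+}$ and $\min$.

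First I would note that \eqref{eq::opt4} is convex: $\ln\bigl(1 + P l_{\max}/N_0\bigr)$ is concave in $P$, so its expectation is concave; the interference functional $\bar{\eta}\, E\{P\}$ is linear in the policy, and the box constraints $0 \leq P \leq P_{\max}$ are affine. Thus KKT conditions are necessary and sufficient. I would then introduce multipliers $\lambda \geq 0$, $\mu \geq 0$, $\nu \geq 0$ for the interference, non-negativity, and peak-power constraints, obtaining the stationarity equation
\[
\frac{l_{\max}}{N_0 + P\, l_{\max}} - \lambda \bar{\eta} + \mu - \nu = 0,
\]
together with the complementary slackness relations $\mu P = 0$ and $\nu(P - P_{\max}) = 0$.

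Next I would run the same three-case analysis as in Theorem \ref{theor::1}. When $P = 0$ one has $\nu = 0$ and $\mu \geq 0$, forcing $1/(\lambda \bar{\eta}) \leq N_0 / l_{\max}$. When $P = P_{\max}$ one has $\mu = 0$ and $\nu \geq 0$, forcing $1/(\lambda \bar{\eta}) \geq N_0/l_{\max} + P_{\max}$. In the interior case $\mu = \nu = 0$, and stationarity gives directly $P = 1/(\lambda \bar{\eta}) - N_0/l_{\max}$. The two boundary cases correspond precisely to clipping this interior solution from below at $0$ (captured by the $[\cdot]^{+}$) and from above at $P_{\max}$ (captured by the outer $\min$), yielding the stated expression; $\lambda$ is then chosen so that the interference constraint binds, which is feasible exactly under the hypothesis $\bar{\eta} P_{\max} > Q$.

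Equivalently, and more briefly, one can simply invoke Theorem \ref{theor::3} with $M_R = 1$ and $l_1 \mapsto l_{\max}$, mirroring how Theorem \ref{eq::theor2} was obtained from Theorem \ref{theor::1}. I do not foresee a real obstacle here: the only thing requiring care is checking that the two boundary inequalities for $1/(\lambda \bar{\eta})$ really coincide with the endpoints of the clipped linear expression $1/(\lambda \bar{\eta}) - N_0/l_{\max}$, which is a direct algebraic verification once the KKT inequalities have been written out.
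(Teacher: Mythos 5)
Your proposal is correct and follows essentially the same route as the paper: the paper's proof simply invokes the procedure of Theorem \ref{theor::3} with $M_R=1$ and $l_1 = l_{\max}$, which is exactly the reduction you note at the end, and your explicit KKT case analysis (including the boundary inequalities $1/(\lambda\bar{\eta}) \leq N_0/l_{\max}$ and $1/(\lambda\bar{\eta}) \geq N_0/l_{\max} + P_{\max}$) correctly fills in the details that the paper leaves implicit.
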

\begin{proof}
The solution is found following the procedure for the proof of Theorem \eqref{theor::3} and setting $M_R=1$ and $l_1 = l_{max}$.
\end{proof}

\section{Comparison of the derived policies}
In this section we compare the several policies in terms of achievable rate. Specifically, in Fig. \ref{fig::2_times_2}
we present the achievable rate for the several policies for a $2\times 2$ MIMO system with $\bar{\eta} = M_T$ and
$\mathbf{H}$ being a complex matrix with i.i.d entries being zero mean Gaussian and unit variance. 
\begin{figure}
\centering
\includegraphics[width=3.0in]{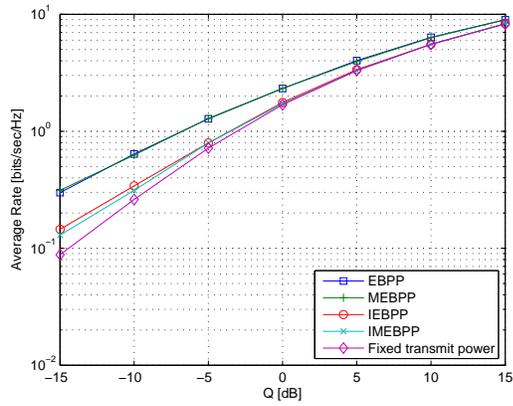}
\caption{Achievable performance of the presented policies in terms of average rate for a $2 \times 2$ system}
\label{fig::2_times_2}
\end{figure}
We can observe that the policies EBPP and MEBPP achieve very similar results. Moreover, policies IEBPP and IMEBPP achieve also very
similar results. Therefore, it can be deduced that knowledge of only the maximum eigenvalue of the channel matrix can lead to very
similar performance as compared to the case that all eigenvalues are known. On the other hand we can see that there is a notable
difference between the performance of Policies EBPP and MEBPP as compared to the performance of IEBPP and IMEBPP. This difference is caused by
the fact that in policies IEBPP and IMEBPP only statistical knowledge of interference channel $\eta$ is assumed. Finally, in our plot,
we also include a curve corresponding to a fixed transmit policy  with transmit power equal to $P = Q / \bar{\eta}$. Comparing
this curve with curves corresponding to policies EBPP and MEBPP we can observe that even the knowledge of eigenvalue $l_{\max}$ and 
the exact knowledge of $\eta$ can lead to a notable performance gain in terms of achievable rate. Moreover, in Fig. \ref{fig::5_times_5} the performance of a $5 \times 5$
MIMO system is ploted for the several policies that we have developed along with the performance of the fixed power policy with
transmitted power $P = Q/ \bar{\eta}$. As it can be seen from this figure, the performance gap between the several policies
decreases and Policies IEBPP, IMEBPP almost coincide, in terms of achievable rate with the fixed transmit policy. On the other hand,
although the performance gap between Policies EBPP, MEBPP and the fixed transmit power decreases, there exists a clear performance
gain for policies EBPP and MEBPP against fixed transmit power and policies IEBPP and IMEBPP. Finally, by comparing results plotted in
Figs. \ref{fig::2_times_2} and \ref{fig::5_times_5} it is evident that for all applied policies, the achievable gain
increases as the number of transmit/receive antennas increases.
\begin{figure}
\centering
\includegraphics[width=3.0in]{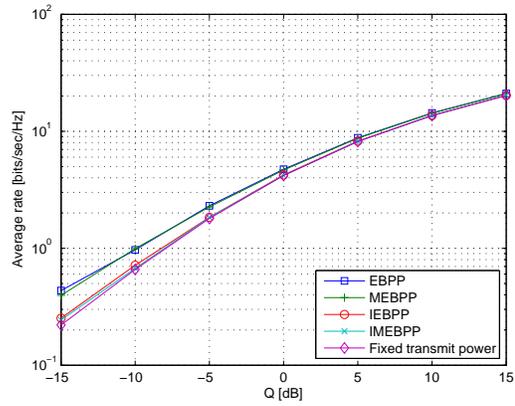}
\caption{Achievable performance of the presented policies in terms of average rate for a $5\times 5$ system.}
\label{fig::5_times_5}
\end{figure}
\section*{Acknowledgement}
This paper is part of the MIMOCORD project. The MIMOCORD research project is implemented within the framework of the Action ``Supporting Postdoctoral Researchers'' of the Operational Program 
``Education and Lifelong Learning'' (Action’s Beneficiary: General Secretariat for Research and Technology), and is co-financed by the European Social Fund (ESF) and the Greek State.

\bibliographystyle{IEEEtran}
\bibliography{IEEEabrv,refs}
\end{document}